\documentclass{amsart}

\usepackage{hyperref}

\newtheorem{theorem}{Theorem}[section]

\newtheorem{lemma}[theorem]{Lemma}

\theoremstyle{definition}

\newtheorem{remark}[theorem]{Remark}

\numberwithin{equation}{section}

\DeclareMathOperator{\supp}{supp}
\DeclareMathOperator{\dist}{dist}

\begin{document}
\title[Quantitative Unique Continuation Principle]{Quantitative unique continuation principle for Schr\"{o}dinger Operators with Singular Potentials}
\author{Abel Klein \and C.S. Sidney Tsang}
\address{University of California, Irvine,
Department of Mathematics,
Irvine, CA 92697-3875,  USA}
 \email[Klein]{aklein@uci.edu}
  \email[Tsang]{tsangcs@uci.edu}

  \thanks{A.K. and C.S.S.T. were  supported  by the NSF under grant DMS-1301641.}

\subjclass[2010]{35B99 (Primary),  81Q10 (Secondary)}

\begin{abstract} We prove  a quantitative unique continuation principle for Schr\"{o}dinger operators  $H=-\Delta+V$ on $\mathrm{L}^2(\Omega)$, where $\Omega$ is an open subset of  $\mathbb{R}^d$  and  $V$ is a singular potential:
$V \in \mathrm{L}^\infty(\Omega) + \mathrm{L}^p(\Omega)$.   As an application, we derive a unique continuation principle for spectral projections  of  Schr\"{o}dinger operators with singular potentials.
\end{abstract}

\maketitle

\section{Introduction}

We prove a quantitative unique continuation principle for Schr\"{o}dinger operators $H=-\Delta+V$ on $\mathrm{L}^2(\Omega)$, where $\Omega$ is an open subset of  $\mathbb{R}^d$, $\Delta$  is the Laplacian operator, and  $V$ is a singular real  potential:
$V \in \mathrm{L}^\infty(\Omega) + \mathrm{L}^p(\Omega)$.  Our results extend the original result of Bourgain and Kenig \cite[Lemma~3.10]{BK}, as well as  subsequent versions  \cite[Theorem~A.1]{GKloc} and \cite[Theorem~3.4]{BKl},  where $V$ is a bounded potential: $V \in \mathrm{L}^\infty(\Omega)$.

As an application, we derive a unique continuation principle for spectral projections  of  Schr\"{o}dinger operators with singular potentials, extending the bounded potential results of   \cite[Theorem 1.1]{Kl} and  \cite[Theorem B.1]{KN}.

To prove the quantitative unique continuation principle for singular potentials we use Sobolev inequalities (not required for bounded potentials).
Since the Sobolev inequality we use in dimension $d=2$ is expressed in terms of Orlicz norms, we review Orlicz spaces, following \cite{RR}.
A function\ $\varphi:\mathbb{R}^+\rightarrow\mathbb{R}^+\cup\{+\infty\}$\ is called a Young function if it is increasing, convex,\ $\varphi(0)=0$, and  $\lim_{t\to\infty}\varphi(t)=\infty$.  Its  complementary function,  given by $\varphi^{\ast}(t)=\sup_{s\in\mathbb{R}^+}\{st-\varphi(s)\}$ for $t\in\mathbb{R}^+$,   is also a Young function. Given  a Young function $\varphi$  and a $\sigma$-finite  measure $\mu$ on a measurable space  $X$, we define the Orlicz space
\begin{equation}
\mathrm{L}^{\varphi}(X)=\left\{f:X\rightarrow\mathbb{R}\ \mbox{measurable}\left|\int_X\varphi(\alpha |f|)d\mu<\infty\ \mbox{for some}\ \alpha>0\right.\right\},
\end{equation}
a Banach space  when
equipped with the Orlicz norm
\begin{equation}
\|f\|_{\varphi}:=\inf\left\{k>0:\int_X\varphi\left(\tfrac{1}{k}|f|\right)d\mu\leq1\right\}.
\end{equation}
  (A standard example is $\varphi(t)= t^p$ with  $1\leq p<\infty$; in this case $\mathrm{L}^{\varphi}(X)= \mathrm{L}^{p}(X)$.)   There is a H\"{o}lder's inequality for Orlicz spaces:
\begin{equation}\label{osholder}
\int_X|fg|d\mu\leq2\|f\|_{\varphi}\|g\|_{\varphi^{\ast}} \quad\text{for all } \quad f\in\mathrm{L}^{\varphi}(X), \; g\in\mathrm{L}^{\varphi^{\ast}}(X).
\end{equation}

We now state our main theorem,  a quantitative unique continuation principle for Schr\"{o}dinger operators with singular potentials. We fix  the Young function
\begin{equation}
\varphi(t)=e^t-1, \quad\text{so}\quad \varphi^{\ast}(t)=\left\{\begin{array}{ll}0&\mbox{if}\quad 0\leq t\leq1
\\t\log t-t+1&\mbox{if}\quad t>1\end{array}\right..
\end{equation}
We use the norm $|x|:=(\sum_{j=1}^d|x_j|^2)^{\frac{1}{2}}$  for $x=(x_1,x_2,\ldots,x_d)\in\mathbb{R}^d$; all distances in\ $\mathbb{R}^d$\ will be measured with respect to this norm.  By $B(x,\delta):=\{y\in\mathbb{R}^d:|y-x|<\delta\}$ we denote the ball centered at $x\in\mathbb{R}^d$ with radius $\delta>0$.  Given subsets\ $A$\ and\ $B$\ of\ $\mathbb{R}^d$, and a function\ $\phi$ on set  $B$, we set  $\phi_A:=\phi\chi_{A\cap B}$. We let   $\phi_{x,\delta}:=\phi_{B(x,\delta)}$.

\begin{theorem}\label{qucp3d}
Let\ $\Omega$ be an open subset of $\mathbb{R}^d$,   $K=K_1+K_2$ with $K_1, K_2 \ge 0$,  and consider a real measurable function\ $V=V^{(1)}+V^{(2)}$ on\ $\Omega$\ with $\|V^{(1)}\|_{\infty}\leq K_1$. Let $\psi\in\mathrm{L}^2(\Omega)$ be real valued with $\Delta \psi\in\mathrm{L}_{loc}^2(\Omega)$, and suppose
\begin{equation}\label{q3d1}
\zeta=-\Delta\psi+V\psi\in\mathrm{L}^2(\Omega).
\end{equation}
Fix a bounded measurable set $\Theta\subset\Omega$ where $\|\psi_{\Theta}\|_{2}>0$, and set
\begin{equation}\label{q3d2}
Q(x,\Theta):=\sup_{y\in\Theta}|y-x|\quad\mbox{for}\quad x\in\Omega.
\end{equation}
Consider\ $x_0\in\Omega\backslash\overline{\Theta}$ such that
\begin{equation}\label{q3d3}
Q=Q(x_0,\Theta)\geq1\quad \mbox{and}\quad B(x_0,6Q+2)\subset\Omega,
\end{equation}
and take
\begin{equation}\label{q3d4}
0<\delta\leq\min\{\dist(x_0,\Theta),\tfrac{1}{2}\}.
\end{equation}
 There  is a constant $m_d>0$,  depending only on $d$, such that:

\begin{enumerate}
\item[(i)]  If  either $d\geq3$ and $\|V^{(2)}\|_p\leq K_2$ with  $p\geq d$, or
$d=2$ and     $(\||V^{(2)}|^p\|_{\varphi^{\ast}})^{\frac{1}{p}}\leq K_2$ with  $p\geq2$,
we have
\begin{equation}\label{q3d5}
\left(\frac{\delta}{Q}\right)^{m_d(1+K^{\frac{2p}{3p-2d}})(Q^{\frac{4p-2d}{3p-2d}}+\log\frac{\|\psi_{\Omega}\|_{2}}{\|\psi_{\Theta}\|_{2}})}\|\psi_{\Theta}\|_{2}^2\leq\|\psi_{x_0,\delta}\|_{2}^2+\delta^2\|\zeta_{\Omega}\|_{2}^2.
\end{equation}
  In particular,  if $d=2$ it suffices to require $\|V^{(2)}\|_p\leq K_2$  with $p>2$ to obtain \eqref{q3d5}.
\item[(ii)] If $d=1$ and $\|V^{(2)}\|_p\leq K_2$ with  $p\geq2$,  we have
\begin{equation}\label{q3d6}
\left(\frac{\delta}{Q}\right)^{m_1(1+K^{\frac{2p}{3p-4}})(Q^{\frac{4p-4}{3p-4}}+\log\frac{\|\psi_{\Omega}\|_{2}}{\|\psi_{\Theta}\|_{2}})}\|\psi_{\Theta}\|_{2}^2\leq\|\psi_{x_0,\delta}\|_{2}^2+\delta^2\|\zeta_{\Omega}\|_{2}^2.
\end{equation}

\end{enumerate}
 \end{theorem}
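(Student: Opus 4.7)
The strategy is to extend the Bourgain--Kenig Carleman-estimate proof used in \cite[Theorem~A.1]{GKloc} and \cite[Theorem~3.4]{BKl} for bounded potentials, using Sobolev embeddings (for $d\ge 3$ and $d=1$) and a Trudinger--Moser/Orlicz embedding (for $d=2$) to absorb the singular part $V^{(2)}\psi$ into the weighted $L^{2}$ and $H^{1}$ terms on the left-hand side of the Carleman inequality.

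\emph{Carleman estimate and cutoff.} The starting point is the standard Bourgain--Kenig weighted Carleman inequality for $-\Delta$: for a radial weight $w$ built from a polynomial-logarithmic function centered at $x_0$ and $\alpha$ sufficiently large, one has schematically
\begin{equation*}
\alpha^{3}\int w^{-1-2\varepsilon}e^{2\alpha\phi}|u|^{2}\,dx + \alpha\int w^{1-2\varepsilon}e^{2\alpha\phi}|\nabla u|^{2}\,dx \le C\int w^{2}e^{2\alpha\phi}|\Delta u|^{2}\,dx
\end{equation*}
for $u\in C^{\infty}_{c}(B(x_0,6Q+2)\setminus\{x_0\})$. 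I would apply it to $u=\eta\psi$, where $\eta$ is a smooth cutoff supported in $B(x_0,6Q+2)\setminus B(x_0,\delta/2)$, equal to $1$ near $\Theta$, with $\nabla\eta,\Delta\eta$ localized in three annular shells (inner: near $\partial B(x_0,\delta)$; middle: around $\Theta$; outer: near $\partial B(x_0,6Q+2)$). Expanding $\Delta(\eta\psi)=\eta(V\psi-\zeta)+2\nabla\eta\cdot\nabla\psi+\psi\Delta\eta$ decomposes the right-hand side into a ``potential'' part and three ``commutator'' parts; the bounded contribution $V^{(1)}\eta\psi$ is absorbed into the $L^{2}$ Carleman term once $\alpha^{3}\ge c K_{1}^{2}$, exactly as in the bounded case.

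\emph{Singular-potential absorption.} The key new step is to absorb $V^{(2)}\eta\psi$. I would apply Hölder with exponents $(p,\,2p/(p-2))$ when $d\neq 2$, and the Orlicz Hölder inequality \eqref{osholder} when $d=2$, obtaining
\begin{equation*}
\int w^{2}e^{2\alpha\phi}|V^{(2)}\eta\psi|^{2}\,dx \le C\,K_{2}^{2}\,\bigl\|\,w\,e^{\alpha\phi}\eta\psi\bigr\|_{L^{2q}}^{2},
\end{equation*}
with the last norm being Lebesgue or Orlicz as appropriate and $q=p/(p-2)$. A Gagliardo--Nirenberg interpolation for $d\ge 3$ or $d=1$ (respectively a Trudinger--Moser/Orlicz embedding for $d=2$) then dominates the right side by a combination of $\int w^{-1-2\varepsilon}e^{2\alpha\phi}|\eta\psi|^{2}$ and $\int w^{1-2\varepsilon}e^{2\alpha\phi}|\nabla(\eta\psi)|^{2}$, i.e.\ exactly the two terms on the left of the Carleman inequality. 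Matching powers of $\alpha$ gives the absorption threshold $\alpha\ge c\,K_{2}^{2p/(3p-2d)}$; the Sobolev range constrains $p\ge d$ for $d\ge 3$ and $p\ge 2$ for $d=1,2$, and the $d=2$ Orlicz hypothesis captures precisely the endpoint gain in $H^{1}(\mathbb{R}^{2})$ that $L^{p}$ narrowly misses.

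\emph{Geometric termination and main obstacle.} Once both parts of $V$ are absorbed, the Carleman inequality restricted to $\Theta$ yields a lower bound of the form (weight factor on $\Theta$)$\cdot\|\psi_{\Theta}\|_{2}^{2}$, while on the right the inner shell contributes a term controlled by $\|\psi_{x_0,\delta}\|_{2}^{2}+\delta^{2}\|\zeta_{\Omega}\|_{2}^{2}$, the $\zeta$-term is controlled by $\|\zeta_{\Omega}\|_{2}^{2}$ globally, and the outer shell is dominated by $\|\psi_{\Omega}\|_{2}^{2}$. Optimizing $\alpha$ of order $Q^{(4p-2d)/(3p-2d)}+\log(\|\psi_{\Omega}\|_{2}/\|\psi_{\Theta}\|_{2})$, subject to the two thresholds $\alpha\ge c(1+K^{2p/(3p-2d)})$ from Steps~2 and~3, produces the exponent in \eqref{q3d5} and \eqref{q3d6}. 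The main obstacle is the singular absorption itself: the Hölder--Sobolev (or Hölder--Orlicz) bound must reproduce \emph{exactly} the weights and $\alpha$-powers of the Carleman left-hand side, otherwise a residual $H^{1}$ term is left uncontrolled and the iteration collapses; the $d=2$ case is the most delicate since $H^{1}(\mathbb{R}^{2})\not\hookrightarrow L^{\infty}$ and only the Trudinger--Moser/Orlicz embedding is available, which is why the hypothesis on $V^{(2)}$ must be phrased in the Orlicz norm and why \eqref{osholder} is indispensable.
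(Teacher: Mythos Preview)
Your strategy coincides with the paper's: apply the Escauriaza--Vessella Carleman estimate (Lemma~\ref{EVineq}, which includes the gradient term) to $\eta\psi$, absorb $V^{(1)}$ as in the bounded case, and use Sobolev (resp.\ Orlicz--Trudinger) embeddings to absorb $V^{(2)}$ into both weighted terms on the left. The paper, however, executes the absorption differently. Instead of your direct H\"older--Gagliardo--Nirenberg step, it first truncates $V^{(2)}=U_M+W_M$ at height $\sqrt{M}$, treats the bounded piece $U_M$ together with $V^{(1)}$, and then applies the \emph{endpoint} embedding (into $L^{2d/(d-2)}$ for $d\ge3$, $L^\infty$ for $d=1$, $L^\varphi$ for $d=2$) only to the tail $W_M$, exploiting $\|W_M\|_q\le M^{-(p-q)/(2q)}K_2^{p/q}$. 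Optimizing the free parameter $M$ (taken $\sim K_2^2(\alpha/\varrho)^{2d/p}$ for $d\ge3$) then produces the exponents $\tfrac{2p}{3p-2d}$ and $\tfrac{4p-2d}{3p-2d}$. This truncation is essentially equivalent to your interpolation for $d\ne2$, but it is genuinely needed in the $d=2$ Orlicz case: your direct H\"older step would require control of $\||V^{(2)}|^2\|_{\varphi^*}$, which is not immediate from the hypothesis $\||V^{(2)}|^p\|_{\varphi^*}\le K_2^p$ when $p>2$, whereas the truncation gives $\|W_M^2\|_{\varphi^*}\le M^{-(p-2)/2}K_2^p$ directly.

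Two smaller gaps. First, there are only \emph{two} shells for $\nabla\eta$, not three: one near $x_0$ and one near the boundary of the large ball; $\eta\equiv1$ on all of $\Theta$ automatically since $\Theta\subset B(x_0,Q)$ and $\dist(x_0,\Theta)\ge\delta$. Second, the commutator terms on these shells involve $\int|\nabla\psi|^2$, which has to be converted to $\int\zeta^2+C(1+K^{2p/(2p-d)})\int\psi^2$ via an interior (Caccioppoli-type) estimate; this in turn requires the relative form-boundedness of $V$, which the paper establishes by a separate H\"older--Sobolev argument on $V^{(2)}$ (see \eqref{q3dp20}, \eqref{q2dp21}, \eqref{q1dp21}). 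You omit this ingredient, but without it the inner and outer shell contributions cannot be reduced to $\|\psi_{x_0,\delta}\|_2^2$, $\|\psi_\Omega\|_2^2$, and $\|\zeta_\Omega\|_2^2$ as the final estimate demands.
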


Letting  $p\to \infty$ in Theorem~\ref{qucp3d} we recover \cite[Theorem~3.4]{BKl}.
The proof of Theorem~\ref{qucp3d}, given  in Section~\ref{secproofthm},
 relies on a Carleman estimate of Escauriaza and Vesella \cite[Theorem~2]{EV}, stated in Lemma~\ref{EVineq}.  To control singular potentials we use  all the terms  in this estimate, including the the gradient term, and Sobolev's inequalities.  In the
 proofs for bounded potentials \cite{BK,GKloc,BKl} it suffices to use a simpler version of this Carleman estimate without the the gradient term (see \cite[Lemma~3.15]{BK}).

As an application of Theorem~\ref{qucp3d}, we prove a unique continuation principle for spectral projections  of Schr\"{o}dinger operators with singular potentials, extending   \cite[Theorem 1.1]{Kl} (in the form given in  \cite[Theorem B.1]{KN}) to Schr\"{o}dinger operators with singular potentials. (See also
 \cite[Section~4]{CHK1}, \cite[Theorem~2.1]{CHK2}, \cite[Theorem~A.6]{GKloc}, and \cite[Theorem~2.1]{RV} for unique continuation principles for spectral projections  of Schr\"{o}dinger operators with bounded potentials.)

We consider rectangles in $\mathbb{R}^d$  of the form
\begin{equation}\label{ucp2}
\Lambda=\Lambda_{\mathbf{L}}(a)=a+\prod_{j=1}^d\left(-\tfrac{L_j}{2},\tfrac{L_j}{2}\right)=\prod_{j=1}^d\left(a_j-\tfrac{L_j}{2},a_j+\tfrac{L_j}{2}\right),
\end{equation}
where $a=(a_1,\ldots,a_d)\in\mathbb{R}^d$ and $\mathbf{L}=(L_1,\ldots,L_d)\in(0,\infty)^d$.
 (We write $\Lambda_L(a)=\Lambda_{\mathbf{L}}(a)$ in the special case $L_j=L$ for $j=1,\ldots,d$.) Given  a Schr\"{o}dinger operator $H=-\Delta+V$  on $\mathrm{L}^2(\mathbb{R}^d)$,
by $H_{\Lambda}=-\Delta_{\Lambda}+V_{\Lambda}$ we denote the restriction of $H$ to the rectangle   $\Lambda$ with either Dirichlet or periodic boundary condition: $\Delta_{\Lambda}$ is the Laplacian on $\Lambda$  with either Dirichlet or periodic boundary condition, and $V_{\Lambda}$ is the restriction of $V$ to $\Lambda$.

\begin{theorem}\label{ucpsp}
Let $H=-\Delta+V$ be a Schr\"{o}dinger operator on $\mathrm{L}^2(\mathbb{R}^d)$, where $V=V^{(1)}+V^{(2)}$ with $\|V^{(1)}\|_{\infty}\leq K_1<\infty$ and $\|V^{(2)}\|_p\leq K_2<\infty$ with $p\geq d$ for $d\geq3$, $p>2$ for $d=2$, and $p\geq2$ for $d=1$.   Set $K=K_1+K_2$.  Fix $\delta\in(0,\frac{1}{2}]$, and let $\{y_k\}_{k\in\mathbb{Z}^d}$ be sites in $\mathbb{R}^d$ with $B(y_k,\delta)\subset\Lambda_1(k)$\ for all $k\in\mathbb{Z}^d$. There exists a constant $M_d>0$, depending only on $d$,  such that, defining $\gamma = \gamma(d,p,K,\delta,E_0) > 0$ for $E_0>0$  by
\begin{equation}\label{ucpsp2}
\gamma^2=\left\{\begin{array}{ll}
\frac{1}{2}\delta^{M_d\left(1+(K+E_0)^{\frac{4p^2}{(3p-2d)(2p-d)}}\right)}&\mbox{for}\quad d\geq2\\\frac{1}{2}\delta^{M_d\left(1+(K+E_0)^{\frac{2p^2}{(3p-4)(p-1)}}\right)}&\mbox{for}\quad d=1
\end{array}
\right.,
\end{equation}
then,
 given a rectangle $\Lambda$ as in \eqref{ucp2}, where $a\in\mathbb{R}^d$ and $L_j\geq114\sqrt{d}$ for $j=1,\ldots,d$, and a  closed interval $I\subset(-\infty,E_0]$ with $|I|\leq2\gamma$, we have
 \begin{equation}\label{ucpsp3}
\chi_I(H_{\Lambda})W^{(\Lambda)}\chi_I(H_{\Lambda})\geq\gamma^2\chi_I(H_{\Lambda}),
\end{equation}
where
\begin{equation}\label{ucpsp1}
W^{(\Lambda)}=\sum_{k\in \mathbb{Z}^d, \, \Lambda_1(k)\subset\Lambda}\chi_{B(y_k,\delta)}.
\end{equation}
\end{theorem}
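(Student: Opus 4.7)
My plan is to show that for every $\phi \in \Ran \chi_I(H_\Lambda)$ with $\|\phi\|_2 = 1$,
\begin{equation*}
\sum_{k:\,\Lambda_1(k)\subset\Lambda} \|\phi_{B(y_k, \delta)}\|_2^2 \ge \gamma^2,
\end{equation*}
which is equivalent to \eqref{ucpsp3}. Setting $E_\ast := \tfrac{1}{2}(\inf I + \sup I)$ and $\zeta := (H_\Lambda - E_\ast)\phi$, the hypothesis $|I| \le 2\gamma$ yields $\|\zeta\|_2 \le \gamma$. I absorb $E_\ast$ into the bounded part by writing $\tilde V^{(1)} := V^{(1)} - E_\ast$, $\tilde V^{(2)} := V^{(2)}$, so that $\|\tilde V^{(1)}\|_\infty \le \tilde K := K_1 + |E_\ast|$ and $-\Delta\phi + \tilde V \phi = \zeta$ on $\Lambda$. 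Then I extend $\phi, \zeta$ across $\partial\Lambda$ compatibly with the imposed boundary condition of $H_\Lambda$---periodically for periodic BC; via a smooth cutoff followed by zero extension for Dirichlet BC---producing $\tilde\phi, \tilde\zeta$ on an open $\Omega \supset \Lambda$ where the Schr\"odinger relation persists with local $L^2$ norms preserved up to absolute constants, so that Theorem~\ref{qucp3d} applies locally on $\Omega$.

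The critical a priori estimate is $|E_\ast| \le C_d(K+E_0)^{\beta_d(p)}$ with $\beta_d(p) = 2p/(2p-d)$ for $d \ge 2$ and $\beta_1(p) = p/(p-1)$; this is where the characteristic exponent of $(K+E_0)$ in \eqref{ucpsp2} is born. The bound follows from the operator inequality $H_\Lambda \ge -K_1 - CK_2^{\beta_d(p)}$ combined with $E_\ast \le E_0$. For $d \ge 2$, the Gagliardo--Nirenberg inequality $\|u\|_{2p/(p-1)} \le C\|\nabla u\|_2^{d/(2p)}\|u\|_2^{1-d/(2p)}$ together with H\"older and Young gives
\begin{equation*}
\int |V^{(2)}||u|^2 \le \|V^{(2)}\|_p\|u\|_{2p/(p-1)}^2 \le \tfrac{1}{2}\|\nabla u\|_2^2 + CK_2^{2p/(2p-d)}\|u\|_2^2.
\end{equation*}
For $d = 1$, the one-dimensional Sobolev embedding $\|u\|_\infty \le C(\|u'\|_2 + \|u\|_2)$ combined with $\|u\|_{2p/(p-1)}^2 \le \|u\|_\infty^{2/p}\|u\|_2^{2(p-1)/p}$ and Young yields the corresponding $CK_2^{p/(p-1)}$ bound. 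Consequently $\tilde K^{2p/(3p-2d)} \le C(K+E_0)^{4p^2/((3p-2d)(2p-d))}$ for $d \ge 2$ (and $\tilde K^{2p/(3p-4)} \le C(K+E_0)^{2p^2/((3p-4)(p-1))}$ for $d = 1$), matching the exponents in \eqref{ucpsp2}.

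The remainder is a local covering argument. Fix $R := 12\sqrt{d}+3$; the hypothesis $L_j \ge 114\sqrt d$ ensures $B(y_k, R) \subset \Omega$ for every $k$ with $\Lambda_1(k) \subset \Lambda$. For each such $k$, set $\Omega_k := B(y_k, R)$ and let $\Lambda_1(k^\ast) \subset \Omega_k \setminus \Lambda_1(k)$ be the unit sub-cube maximizing $\|\tilde\phi_{\Lambda_1(\cdot)}\|_2^2$ among the unit sub-cubes of $\Omega_k$; pigeonhole gives $\|\tilde\phi_{\Lambda_1(k^\ast)}\|_2^2 \ge c_d\|\tilde\phi_{\Omega_k}\|_2^2$ and therefore $\log(\|\tilde\phi_{\Omega_k}\|_2/\|\tilde\phi_{\Lambda_1(k^\ast)}\|_2) \le C_d$. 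Since $Q := Q(y_k, \Lambda_1(k^\ast)) \in [1, R - \tfrac{\sqrt d}{2}]$ and $6Q + 2 \le R$, Theorem~\ref{qucp3d} gives
\begin{equation*}
\|\phi_{B(y_k,\delta)}\|_2^2 + \delta^2\|\tilde\zeta_{\Omega_k}\|_2^2 \ge c_d\gamma_0^2\|\tilde\phi_{\Omega_k}\|_2^2, \qquad \gamma_0^2 := \delta^{M_d'\bigl(1 + (K+E_0)^{\alpha_d(p)}\bigr)},
\end{equation*}
with $\alpha_d(p)$ the exponent in \eqref{ucpsp2}. Because the $\{\Omega_k\}$ have bounded overlap multiplicity $O(R^d)$, $\sum_k\|\tilde\zeta_{\Omega_k}\|_2^2 \le C_dR^d\gamma^2$; and $\sum_k\chi_{\Omega_k} \ge 1$ on $\Lambda$ outside a strip of width $R$, which carries only a bounded fraction of $\|\phi\|_2^2$ (thanks to $L_j \ge 114\sqrt d \gg R$, with mass control via periodic invariance for periodic BC and via $\phi|_{\partial\Lambda} = 0$ plus a Poincar\'e estimate for Dirichlet BC). Summing,
\begin{equation*}
\sum_k \|\phi_{B(y_k,\delta)}\|_2^2 \ge c_d\gamma_0^2 - C_dR^d\delta^2\gamma^2 \ge \tfrac{1}{2}c_d\gamma_0^2
\end{equation*}
as soon as $\gamma^2 \le c_d\gamma_0^2/(2C_dR^d\delta^2)$; the choice $\gamma^2 := \tfrac{1}{2}c_d\gamma_0^2 R^{-d}$ (consistent with $\delta \le \tfrac{1}{2}$) matches \eqref{ucpsp2} after absorbing $R$-dependent constants into $M_d$. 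The principal obstacle is the second paragraph---the careful derivation of the Sobolev-type bound on $|E_\ast|$ with the correct $\beta_d(p)$, which dictates the exponent in \eqref{ucpsp2}; once that is in hand, the geometric covering and boundary-strip control are routine given $L_j \ge 114\sqrt d$.
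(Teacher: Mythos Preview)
Your overall strategy coincides with the paper's: obtain the form bound $|\langle V f,f\rangle|\le\theta\|f\|_2^2+\tfrac12\|\nabla f\|_2^2$ with $\theta=C_d(1+K^{\beta_d(p)})$ (recorded in the paper as \eqref{q3dp20}, \eqref{q2dp21}, \eqref{q1dp21}), use it to bound $|E_\ast|$, absorb $E_\ast$ into $V^{(1)}$, and then run the covering argument underlying Theorem~\ref{ucpt}. Your exponent bookkeeping in the second paragraph is correct and is precisely how \eqref{ucpsp2} arises (note though that the $K$ in Theorem~\ref{qucp3d} is $K_1+K_2$, so the effective constant should be $K+|E_\ast|$, not $K_1+|E_\ast|$; this does not affect the outcome).

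The covering paragraph, however, has two genuine gaps. First, the pigeonhole step fails as written: you require $\Lambda_1(k^\ast)\subset\Omega_k\setminus\Lambda_1(k)$ yet claim $\|\tilde\phi_{\Lambda_1(k^\ast)}\|_2^2\ge c_d\|\tilde\phi_{\Omega_k}\|_2^2$, but nothing prevents the mass of $\tilde\phi$ in $\Omega_k$ from concentrating in $\Lambda_1(k)$ itself (unique continuation forbids exact vanishing outside $\Lambda_1(k)$, not smallness). The fix, implicit in the argument from \cite{KN} the paper invokes, is to let $k^\ast$ range over \emph{all} sub-cubes of $\Omega_k$; when the maximizer is $k^\ast=k$ one instead applies Theorem~\ref{qucp3d} with $\Theta=\Lambda_1(k)$ and $x_0=y_{k'}$ for a fixed neighbor $k'$, and bounded overlap handles the resulting many-to-one map $k\mapsto k'$ upon summation. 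Second, your boundary treatment is incorrect for the Dirichlet case: a smooth cutoff followed by zero extension does \emph{not} preserve $-\Delta\tilde\phi+\tilde V\tilde\phi=\tilde\zeta$, since the Laplacian hitting the cutoff produces terms involving $\nabla\phi$ that cannot be absorbed into $\tilde\zeta$ while keeping $\|\tilde\zeta\|_2\le C\gamma$; and a Poincar\'e inequality on the boundary strip does not bound the strip mass by a \emph{dimensional} constant (the estimate would go through $\|\nabla\phi\|_2$, hence depend on $K$ and $E_0$). The correct device is odd reflection across each face of $\Lambda$ (periodic extension for periodic BC), after which the equation persists with reflected data and $\|\widehat{V^{(2)}}\|_p\le 3^d K_2$ on the enlarged region, exactly as the paper records in the proof of Theorem~\ref{ucpt}; the balls $\Omega_k$ for $k$ near $\partial\Lambda$ then simply extend into the reflected copies and no separate strip estimate is needed.
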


The proof of Theorem~\ref{ucpsp} is discussed in Section~\ref{secUCPSP}.

\begin{remark} Using Theorem~\ref{ucpsp} we can prove optimal Wegner estimates for Anderson Hamiltonians with singular background potentials, extending the results  of \cite{Kl}.
\end{remark}

\section{Quantitative unique continuation principle for Schr\"{o}dinger operators with singular potentials}\label{secproofthm}

 The proof of Theorem \ref{qucp3d} is based on a Carleman estimate  of Escauriaza and Vesella \cite[Theorem~2]{EV}, which we state in a ball of radius $\varrho>0$.
\begin{lemma}\label{EVineq}
Given\ $\varrho>0$, the function $\omega_{\varrho}(x)=\phi(\frac{1}{\varrho}|x|)$ on\ $\mathbb{R}^d$, where\ $\phi(s):=se^{-\int_0^s\frac{1-e^{-t}}{t}dt}$, is a strictly increasing continuous function on\  $[0,\infty)$,\ $\mathrm{C}^{\infty}$\ on\ $(0,\infty)$, satisfying
\begin{equation}\label{ei1}
\frac{1}{C_1\varrho}|x|\leq\omega_{\varrho}(x)\leq\frac{1}{\varrho}|x|\quad \mbox{for}\quad x\in B(0,\varrho),
\end{equation}
where\ $C_1=\phi(1)^{-1}\in(2,3)$. Moreover, there exist positive contants\ $C_2$\ and\ $C_3$, depending only on $d$, such that for all $\alpha\geq C_2$ and all real valued functions\ $f\in H^2(B(0,\varrho))$ with $\supp f\subset B(0,\varrho)\backslash\{0\}$ we have
\begin{equation}\label{ei2}
\alpha^3\int_{\mathbb{R}^d}\omega_{\varrho}^{-1-2\alpha}f^2dx+\alpha\varrho^2\int_{\mathbb{R}^d}\omega_{\varrho}^{1-2\alpha}|\nabla f|^2dx\leq C_3\varrho^4\int_{\mathbb{R}^d}\omega_{\varrho}^{2-2\alpha}(\Delta f)^2dx.
\end{equation}
\end{lemma}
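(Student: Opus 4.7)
The plan is to verify the stated properties of $\omega_\varrho$ directly, and then invoke the standard Carleman-estimate scheme: conjugate the Laplacian by the weight and reduce to a positive commutator bound.

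First, logarithmic differentiation of $\phi(s)=s\exp(-\int_0^s(1-e^{-t})/t\,dt)$ yields $\phi'(s)/\phi(s)=e^{-s}/s$, so $\phi$ is strictly positive and smooth on $(0,\infty)$ and strictly increasing on $[0,\infty)$. The quotient $\phi(s)/s=\exp(-\int_0^s(1-e^{-t})/t\,dt)$ is strictly decreasing from $1$ at $s=0^+$ to $0$ at $s=\infty$, so $\phi(s)/s\in[\phi(1),1]$ for $s\in(0,1]$; setting $s=|x|/\varrho$ gives \eqref{ei1}, and a direct numerical estimate of $\int_0^1(1-e^{-t})/t\,dt$ places $C_1=\phi(1)^{-1}$ in $(2,3)$.

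For \eqref{ei2}, the dilation $x\mapsto\varrho x$ shows the inequality is scale-invariant in $\varrho$ (every term picks up the same factor $\varrho^d$), so it suffices to treat $\varrho=1$ with $\omega=\phi(|\,\cdot\,|)$. I would then substitute $g=\omega^{-\alpha}f$, which is supported in $B(0,1)\setminus\{0\}$, and form the conjugated operator $L_\alpha g:=\omega^{-\alpha}\Delta(\omega^\alpha g)$, so that
\begin{equation*}
\int_{\mathbb{R}^d}\omega^{2-2\alpha}(\Delta f)^2\,dx=\int_{\mathbb{R}^d}\omega^2(L_\alpha g)^2\,dx.
\end{equation*}
Splitting $L_\alpha=S_\alpha+A_\alpha$ into formally symmetric and antisymmetric parts with respect to the weight $\omega^2\,dx$, the cross-product expansion gives $\|L_\alpha g\|^2=\|S_\alpha g\|^2+\|A_\alpha g\|^2+\langle[S_\alpha,A_\alpha]g,g\rangle$, and the problem reduces to a pointwise commutator bound: for $\alpha\geq C_2$, one must show $\langle[S_\alpha,A_\alpha]g,g\rangle\gtrsim\alpha^3\int\omega^{-1}g^2\,dx+\alpha\int\omega|\nabla g|^2\,dx$. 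Reversing the substitution then recovers the two terms on the left of \eqref{ei2}, modulo mixed terms that are absorbed by integration by parts.

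The main obstacle is retaining the gradient term with a favorable coefficient. For the naive choice $\omega(x)=|x|$ the commutator produces only the zeroth-order $\alpha^3$ contribution, and the $|\nabla g|^2$ coefficient is not of a useful sign. The ODE $\phi'/\phi=e^{-s}/s$ is engineered precisely so that the convexity-type quantities built from $(\log\phi)''$ have definite signs, making the coefficient of $|\nabla g|^2$ in the commutator nonnegative; the threshold $\alpha\geq C_2$ then absorbs the lower-order remainders. For the purposes of this paper, one simply invokes \eqref{ei2} from \cite[Theorem~2]{EV} together with the scaling argument described above.
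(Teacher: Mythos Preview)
Your verification of the elementary properties of $\phi$ is correct: the logarithmic derivative computation $\phi'/\phi=e^{-s}/s$, the monotonicity of $\phi(s)/s$, and the resulting two-sided bound \eqref{ei1} all check out, as does the scaling reduction to $\varrho=1$.

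The paper, however, does not prove this lemma at all. It simply records that the estimate is \cite[Theorem~2]{EV} in the parabolic setting, and that the elliptic version follows by the argument in \cite[Proposition~B.3]{KSU}. Your proposal goes further: you work out the elementary properties of $\phi$ explicitly and outline the standard conjugation/commutator scheme behind the Carleman estimate, before ultimately also deferring to \cite{EV}. The difference is that the paper treats the lemma as a black-box citation (with \cite{KSU} supplying the parabolic-to-elliptic passage), whereas you sketch the mechanism directly. Your sketch is accurate in spirit---the specific choice of $\phi$ does make the relevant convexity coefficients have the right sign---but the commutator computation you allude to is genuinely nontrivial, and a full proof along those lines would amount to reproving \cite[Theorem~2]{EV}. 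For the purposes of this paper either route is acceptable, since both end at the same citation; your additional detail on $\phi$ is a bonus rather than a requirement.
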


This estimate  is given in the parabolic setting in  \cite{EV}, but the estimate in the elliptic setting as in the lemma follows immediately by the argument in \cite[Proposition B.3]{KSU}. In the proofs of the quantitative unique continuation principle for bounded potentials \cite{BK,GKloc,BKl} only the first term in the left hand side of\ \eqref{ei2} is used (see \cite[Lemma~3.15]{BK}), but for singular potentials we also need to use the gradient term in the left hand side of\ \eqref{ei2} and Sobolev's inequalities.

\begin{proof}[Proof of Theorem \ref{qucp3d}]    Let $C_1,C_2,C_3$ be the constants of Lemma~ \ref{EVineq}, which depend only on $d$.  Without loss of generality $C_2>1$. By $C_j$,\ $j=4,5,\ldots$, we will always denote an appropriate nonzero constant depending only on $d$.

We follow Bourgain and Klein's proof for bounded potentials \cite[Theorem~3.4]{BKl}.
Let $x_0\in\Omega\backslash\overline{\Theta}$ be as in  \eqref{q3d3}.  Without loss of generality we take $x_0=0$, $\Theta\subset B(0,2C_1Q)$, and $\Omega=B(0,\varrho)$, where\ $\varrho=2C_1Q+2$, and let $\delta$ be as in \eqref{q3d4}.
Proceeding as in \cite[Theorem~3.4]{BKl}, we
 fix a function $\eta\in\mathrm{C}_c^{\infty}(\mathbb{R}^d)$ given by $\eta(x)=\xi(|x|)$, where $\xi$ is an even $\mathrm{C}^{\infty}$ function on $\mathbb{R}$, $0\leq\xi\leq1$, such that
\begin{align}\label{q3dp1}
&\xi(s)=1\quad \mbox{if}\quad \tfrac{3}{4}\delta \leq|s|\leq2C_1Q, \quad
 \xi(s)=0\quad \mbox{if}\quad |s|\leq\tfrac{1}{4}\delta\ \mbox{or}\ |s|\geq2C_1Q+1,
\nonumber\\&|\xi^j(s)|\leq\left(\tfrac{4}{\delta}\right)^j\quad \mbox{if}\quad |s|\leq\tfrac{3}{4}\delta,\quad
|\xi^j(s)|\leq2^j\quad \mbox{if}\quad |s|\geq2C_1Q, j=1,2,\\
\nonumber &  |\nabla\eta(x)|\leq\sqrt{d}|\xi'(|x|)| \quad \text{and}\quad |\Delta\eta(x)|\leq d|\xi''(|x|)|,\\ \nonumber &
\supp\nabla\eta\subset\{\tfrac{\delta}{4}\leq|x|\leq\tfrac{3\delta}{4}\}\cup\{2C_1Q\leq|x|\leq2C_1Q+1\}.
\end{align}

Let  $\alpha\geq C_2$.
Applying Lemma \ref{EVineq} to the function  $\eta\psi$  gives
\begin{align}\notag
&\frac{\alpha^3}{3C_3\varrho^4}\int_{\mathbb{R}^d}\omega_{\varrho}^{-1-2\alpha}\eta^2\psi^2dx+\frac{\alpha}{3C_3\varrho^2}\int_{\mathbb{R}^d}\omega_{\varrho}^{1-2\alpha}|\nabla(\eta\psi)|^2dx \\ \label{q3dp2} & \qquad
\leq \frac{1}{3}\int_{\mathbb{R}^d}\omega_{\varrho}^{2-2\alpha}(\Delta(\eta\psi))^2dx
\leq \int_{\mathbb{R}^d}\omega_{\varrho}^{2-2\alpha}\eta^2(\Delta\psi)^2dx \\
\nonumber &  \qquad \quad
+4\int_{\supp\nabla\eta}\omega_{\varrho}^{2-2\alpha}|\nabla\eta|^2|\nabla\psi|^2dx
+\int_{\supp\nabla\eta}\omega_{\varrho}^{2-2\alpha}(\Delta\eta)^2\psi^2dx.
\end{align}

Using \eqref{q3d1},  $\|V^{(1)}\|_{\infty}\leq K_1$, and  $\omega_{\varrho}\leq1$ on $\supp\eta$, we have
\begin{align} \label{q3dp3}
 &   \quad  \int_{\mathbb{R}^d}\omega_{\varrho}^{2-2\alpha}\eta^2(\Delta\psi)^2dx\leq2\int_{\mathbb{R}^d}V^2\omega_{\varrho}^{2-2\alpha}\eta^2\psi^2dx+2\int_{\mathbb{R}^d}\omega_{\varrho}^{2-2\alpha}\eta^2\zeta^2dx
\\  & \quad \quad \leq 4K_1^2\int_{\mathbb{R}^d}\omega_{\varrho}^{-1-2\alpha}\eta^2\psi^2dx+4\int_{\mathbb{R}^d}(V^{(2)})^2\omega_{\varrho}^{2-2\alpha}\eta^2\psi^2dx+2\int_{\mathbb{R}^d}\omega_{\varrho}^{2-2\alpha}\eta^2\zeta^2dx. \nonumber
\end{align}
Given  $M>0$, we write $V^{(2)}=U_M + V_M$, where  $U_M=V^{(2)}\chi_{\{|V^{(2)}|\leq\sqrt{M}\}}$ and $W_M=V^{(2)}\chi_{\{|V^{(2)}|>\sqrt{M}\}}$. We have
\begin{align}\label{q3dp4}
\int_{\mathbb{R}^d}(V^{(2)})^2\omega_{\varrho}^{2-2\alpha}\eta^2\psi^2dx\leq& M\int_{\mathbb{R}^d}\omega_{\varrho}^{-1-2\alpha}\eta^2\psi^2dx+\int_{\mathbb{R}^d}W_M^2\omega_{\varrho}^{2-2\alpha}\eta^2\psi^2dx.
\end{align}
Combining \eqref{q3dp2}, \eqref{q3dp3} and \eqref{q3dp4}, we have
\begin{align} \label{q3dp4a}
 &\left(\frac{\alpha^3}{3C_3\varrho^4}-4K_1^2-4M\right)\int_{\mathbb{R}^d}\omega_{\varrho}^{-1-2\alpha}\eta^2\psi^2dx+\frac{\alpha}{3C_3\varrho^2}\int_{\mathbb{R}^d}\omega_{\varrho}^{1-2\alpha}|\nabla(\eta\psi)|^2dx \nonumber \\&\qquad\leq 4\int_{\mathbb{R}^d}W_M^2\omega_{\varrho}^{2-2\alpha}\eta^2\psi^2dx+2\int_{\mathbb{R}^d}\omega_{\varrho}^{2-2\alpha}\eta^2\zeta^2dx \\
\nonumber\ &  \qquad \quad
+4\int_{\supp\nabla\eta}\omega_{\varrho}^{2-2\alpha}|\nabla\eta|^2|\nabla\psi|^2dx
+\int_{\supp\nabla\eta}\omega_{\varrho}^{2-2\alpha}(\Delta\eta)^2\psi^2dx.
\end{align}
Note that for $1\le q\le p$ we have
\begin{align}\label{q3dp5}
\|W_M\|_q \le M^{-\frac{p-q}{2q}}\|W_M\|_p^{\frac{p}{q}} \le M^{-\frac{p-q}{2q}}\|V^{(2)}\|_p^{\frac{p}{q}}\leq M^{-\frac{p-q}{2q}}K_2^{\frac{p}{q}}.
\end{align}
 We set $K=K_1+K_2$ with $K_1, K_2 \ge 0$.

We consider three cases:\smallskip

\noindent{(a)}\ $d\geq3$:   Let  $\|V^{(2)}\|_p\leq K_2$ with  $p\geq d$.
Using H\"{o}lder's inequality and  \eqref{q3dp5} with $q=d$,  we get
\begin{align}\label{q3dp6}
&\int_{\mathbb{R}^d}W_M^2\omega_{\varrho}^{2-2\alpha}\eta^2\psi^2dx\leq\|W_M^2\|_{\frac{d}{2}}\|\omega_{\varrho}^{2-2\alpha}\eta^2\psi^2\|_{\frac{d}{d-2}}
\\&=\|W_M\|_d^2\|\omega_{\varrho}^{1-\alpha}\eta\psi\|_{\frac{2d}{d-2}}^2
\leq M^{-\frac{p-d}{d}}K_2^{\frac{2p}{d}}\|\omega_{\varrho}^{1-\alpha}\eta\psi\|_{\frac{2d}{d-2}}^2.
\nonumber
\end{align}
Using Sobolev's inequality (e.g., \cite[Theorem 7.10]{GiT}), we get
\begin{align}\label{q3dp7}
\|\omega_{\varrho}^{1-\alpha}\eta\psi\|_{\frac{2d}{d-2}}^2&\leq C_4\left(\int_{\mathbb{R}^d}|\nabla(\omega_{\varrho}^{1-\alpha}\eta\psi)|^2\right)
\\&\leq2C_4\int_{\mathbb{R}^d}|\nabla\omega_{\varrho}^{1-\alpha}|^2\eta^2\psi^2dx+2C_4\int_{\mathbb{R}^d}\omega_{\varrho}^{1-2\alpha}|\nabla(\eta\psi)|^2dx.\nonumber
\end{align}
Since
\begin{equation}\label{q3dp8}
|\nabla\omega_{\varrho}^{1-\alpha}|^2=(1-\alpha)^2\frac{\omega_{\varrho}^{2-2\alpha}}{|x|^2\exp(\frac{2}{\varrho}|x|)}\leq\frac{\alpha^2}{\varrho^2}\omega_{\varrho}^{-2\alpha},
\end{equation}
we have (recall $\omega_{\varrho}\leq1$ on $\supp\eta$)
\begin{equation}\label{q3dp9}
\int_{\mathbb{R}^d}|\nabla\omega_{\varrho}^{1-\alpha}|^2\eta^2\psi^2dx\leq\frac{\alpha^2}{\varrho^2}\int_{\mathbb{R}^d}\omega_{\varrho}^{-1-2\alpha}\eta^2\psi^2dx.
\end{equation}
Combining \eqref{q3dp4a}, \eqref{q3dp6}, \eqref{q3dp7} and \eqref{q3dp9}, we conclude that
\begin{align}\label{q3dp10}
&\left(\frac{\alpha^3}{3C_3\varrho^4}-4K_1^2-4M-8C_4M^{-\frac{p-d}{d}}K_2^{\frac{2p}{d}}\frac{\alpha^2}{\varrho^2}\right)\int_{\mathbb{R}^d}\omega_{\varrho}^{-1-2\alpha}\eta^2\psi^2dx
\nonumber\\& \hspace{60pt} +\left(\frac{\alpha}{3C_3\varrho^2}-8C_4M^{-\frac{p-d}{d}}K_2^{\frac{2p}{d}}\right)\int_{\mathbb{R}^d}\omega_{\varrho}^{1-2\alpha}|\nabla(\eta\psi)|^2dx
\nonumber\\ &  \quad \le      4\int_{\supp\nabla\eta}\omega_{\varrho}^{2-2\alpha}|\nabla\eta|^2|\nabla\psi|^2dx
+\int_{\supp\nabla\eta}\omega_{\varrho}^{2-2\alpha}(\Delta\eta)^2\psi^2dx\\  \notag & \hspace{60pt}
+2\int_{\supp\eta}\omega_{\varrho}^{2-2\alpha}\eta^2\zeta^2dx .
\end{align}
Assuming $\alpha\geq\varrho$ and setting  $M=K_2^2\alpha^{\frac{2d}{p}}\varrho^{\frac{-2d}{p}}$,  we have
\begin{align}\label{q3dp11}
4K_1^2+4M+8C_4M^{-\frac{p-d}{d}}K_2^{\frac{2p}{d}}\alpha^2\varrho^{-2}& =4K_1^2+4K_2^2(1+2C_4)\alpha^{\frac{2d}{p}}\varrho^{\frac{-2d}{p}}
\nonumber\\&   \le  (4K^2(1+2C_4))\alpha^{\frac{2d}{p}}\varrho^{\frac{-2d}{p}}.
\end{align}
 Taking
\begin{equation}\label{q3dp13}
\alpha\geq C_5(1+K^{\frac{2p}{3p-2d}})\varrho^{\frac{4p-2d}{3p-2d}}\geq C_5(1+K^{\frac{2p}{3p-2d}})\varrho^{\frac{4}{3}},
\end{equation}
we can guarantee that   $\alpha>C_2$,
\begin{equation}\label{q3dp12}
\frac{\alpha^3}{3C_3\varrho^4}\geq3(4K^2(1+2C_4)\alpha^{\frac{2d}{p}}\varrho^{\frac{-2d}{p}}),
\end{equation}
and
\begin{equation}\label{q3dp14}
\frac{\alpha}{3C_3\varrho^2}-8C_4M^{-\frac{p-d}{d}}K_2^{\frac{2p}{d}}\geq0.
\end{equation}

Using \eqref{ei1} and recalling \eqref{q3d2}, we obtain
\begin{equation}\label{q3dp15}
\int_{\mathbb{R}^d}\omega_{\varrho}^{-1-2\alpha}\eta^2\psi^2dx\geq\left(\frac{\varrho}{Q}\right)^{1+2\alpha}\|\psi_{\Theta}\|_2^2\geq(2C_1)^{1+2\alpha}\|\psi_{\Theta}\|_2^2.
\end{equation}
Combining \eqref{q3dp10}, \eqref{q3dp12}, \eqref{q3dp14} and \eqref{q3dp15}, we conclude that
\begin{align}\label{q3dp16}
&\frac{2\alpha^3}{9C_3\varrho^4}(2C_1)^{1+2\alpha}\|\psi_{\Theta}\|_2^2    \leq4\int_{\supp\nabla\eta}\omega_{\varrho}^{2-2\alpha}|\nabla\eta|^2|\nabla\psi|^2dx
\\ & \hspace{40pt}  +\int_{\supp\nabla\eta}\omega_{\varrho}^{2-2\alpha}(\Delta\eta)^2\psi^2dx+2\int_{\supp\eta}\omega_{\varrho}^{2-2\alpha}\eta^2\zeta^2dx.  \nonumber
\end{align}

Let $f\in\mathcal{D}(\nabla)$. For arbitrary $M>0$ we have
\begin{align}\label{q3dp17}
\left|\int_{\mathbb{R}^d}Vf^2dx\right| \le (K_1+M^{\frac{1}{2}})\|f\|_2^2+\int_{\mathbb{R}^d}|W_M|f^2dx.
\end{align}
Using H\"{o}lder's inequality,  \eqref{q3dp5} with $q=\frac{d}{2}$,
  and Sobolev's inequality, we get
\begin{equation}\label{q3dp19}
\left|\int_{\mathbb{R}^d}Vf^2dx\right|\leq(K_1+M^{\frac{1}{2}})\|f\|_2^2+C_4 M^{-\frac{2p-d}{2d}}K_2^{\frac{2p}{d}}\|\nabla f\|_2^2.
\end{equation}
Taking $M=(2C_4K_2^{\frac{2p}{d}})^{\frac{2d}{2p-d}}$ (we can require $C_4\geq1$), we get
\begin{equation}\label{q3dp20}
\left|\int_{\mathbb{R}^d}Vf^2dx\right|\leq 2C_4(1+K^{\frac{2p}{2p-d}})\|f\|_2^2+\frac{1}{2}\|\nabla f\|_2^2.
\end{equation}

We have
\begin{align}\label{q3dp21}
&\int_{\{2C_1Q\leq|x|\leq2C_1Q+1\}}\omega_{\varrho}^{2-2\alpha}(4|\nabla\eta|^2|\nabla\psi|^2+(\Delta\eta)^2\psi^2)dx
\\& \quad \le16d^2\left(\frac{C_1\varrho}{2C_1Q}\right)^{2\alpha-2}\int_{\{2C_1Q\leq|x|\leq2C_1Q+1\}}(4|\nabla\psi|^2+\psi^2)dx
\nonumber\\ & \quad \le C_6\left(\tfrac{5}{4}C_1\right)^{2\alpha-2}\int_{\{2C_1Q-1\leq|x|\leq2C_1Q+2\}}(\zeta^2+(1+K^{\frac{2p}{2p-d}})\psi^2)dx
\nonumber\\& \quad \le C_6\left(\tfrac{5}{4}C_1\right)^{2\alpha-2}(\|\zeta_{\Omega}\|_2^2+(1+K^{\frac{2p}{2p-d}})\|\psi_{\Omega}\|_2^2),\notag
\end{align}
where we used \eqref{q3dp20} and an interior estimate (e.g., \cite[Lemma A.2]{GK1}).  Similarly,
\begin{align}\label{q3dp22}
&\int_{\{\frac{\delta}{4}\leq|x|\leq\frac{3\delta}{4}\}}\omega_{\varrho}^{2-2\alpha}(4|\nabla\eta|^2|\nabla\psi|^2+(\Delta\eta)^2\psi^2)dx
\\   & \quad \le 256d^2\delta^{-4}(4\delta^{-1}C_1\varrho)^{2\alpha-2}\int_{\{\frac{\delta}{4}\leq|x|\leq\frac{3\delta}{4}\}}(4|\nabla\psi|^2+\psi^2)dx
\nonumber\\& \quad \le  C_7\delta^{-4}(4\delta^{-1}C_1\varrho)^{2\alpha-2}\int_{\{|x|\leq\delta\}}(\zeta^2+(K^{\frac{2p}{2p-d}}+\delta^{-2})\psi^2)dx
\nonumber\\& \quad \le C_7\delta^{-4}(16\delta^{-1}C_1^2Q)^{2\alpha-2}(\|\zeta_{\Omega}\|_2^2+(K^{\frac{2p}{2p-d}}+\delta^{-2})\|\psi_{0,\delta}\|_2^2). \nonumber
\end{align}
In addition,
\begin{equation}\label{q3dp23}
\int_{\supp\eta}\omega_{\varrho}^{2-2\alpha}\eta^2\zeta^2dx\leq(4\delta^{-1}C_1\varrho)^{2\alpha-2}\|\zeta_{\Omega}\|_2^2\leq(16\delta^{-1}C_1^2Q)^{2\alpha-2}\|\zeta_{\Omega}\|_2^2.
\end{equation}

If we have
\begin{equation}\label{q3dp24}
\frac{\alpha^3}{\varrho^4}\left(\frac{8}{5}\right)^{2\alpha}\|\psi_{\Theta}\|_2^2\geq C_8(1+K^{\frac{2p}{2p-d}})\|\psi_{\Omega}\|_2^2,
\end{equation}
we obtain
\begin{equation}\label{q3dp25}
C_6\left(\tfrac{5}{4}C_1\right)^{2\alpha-2}(1+K^{\frac{2p}{2p-d}})\|\psi_{\Omega}\|_2^2\leq\frac{1}{2}\frac{2\alpha^3}{9C_3\varrho^4}(2C_1)^{1+2\alpha}\|\psi_{\Theta}\|_2^2,
\end{equation}
so we conclude that
\begin{align}\label{q3dp26}
&\frac{\alpha^3}{9C_3\varrho^4}(2C_1)^{1+2\alpha}\|\psi_{\Theta}\|_2^2  \\
&  \hspace{40pt} \notag \leq C_9\delta^{-4}(16\delta^{-1}C_1^2Q)^{2\alpha-2}((K^{\frac{2p}{2p-d}}+\delta^{-2})\|\psi_{0,\delta}\|_2^2+\|\zeta_{\Omega}\|_2^2).
\end{align}
Thus,
\begin{equation}\label{q3dp27}
\frac{\alpha^3}{\varrho^4}Q^4((8C_1Q)^{-1}\delta)^{2\alpha+2}\|\psi_{\Theta}\|_2^2\leq C_{10}((K^{\frac{2p}{2p-d}}+\delta^{-2})\|\psi_{0,\delta}\|_2^2+\|\zeta_{\Omega}\|_2^2).
\end{equation}
Since $(\tfrac{\delta}{Q})^5\leq(\tfrac{1}{2})^5\leq\tfrac{1}{8C_1}$ by \eqref{q3d4}, we have
\begin{equation}\label{q3dp28}
\frac{\alpha^3}{\varrho^4}Q^6\left(\frac{\delta}{Q}\right)^{12\alpha+14}\|\psi_{\Theta}\|_2^2\leq C_{11}((1+K^{\frac{2p}{2p-d}})\|\psi_{0,\delta}\|_2^2+\delta^2\|\zeta_{\Omega}\|_2^2).
\end{equation}

To satisfy \eqref{q3dp13} and \eqref{q3dp24}, we choose
\begin{equation}\label{q3dp29}
\alpha=C_{12}(1+K^{\frac{2p}{3p-2d}})\left(Q^{\frac{4p-2d}{3p-2d}}+\log\frac{\|\psi_{\Omega}\|_{2}}{\|\psi_{\Theta}\|_{2}}\right),
\end{equation}
Combining with \eqref{q3dp28}, and recalling $Q\geq1$, we get
\begin{align}\label{q3dp30}
&(1+K^{\frac{2p}{3p-2d}})^3\left(\frac{\delta}{Q}\right)^{C_{13}(1+K^{\frac{2p}{3p-2d}})\left(Q^{\frac{4p-2d}{3p-2d}}+\log\frac{\|\psi_{\Omega}\|_{2}}{\|\psi_{\Theta}\|_{2}}\right)}\|\psi_{\Theta}\|_2^2
\nonumber\\& \qquad  \qquad \le C_{14}((1+K^{\frac{2p}{2p-d}})\|\psi_{0,\delta}\|_2^2+\delta^2\|\zeta_{\Omega}\|_2^2),
\end{align}
and hence
\begin{equation}\label{q3dp31}
\left(\frac{\delta}{Q}\right)^{m_d(1+K^{\frac{2p}{3p-2d}})(Q^{\frac{4p-2d}{3p-2d}}+\log\frac{\|\psi_{\Omega}\|_{2}}{\|\psi_{\Theta}\|_{2}})}\|\psi_{\Theta}\|_{2}^2\leq\|\psi_{x_0,\delta}\|_{2}^2+\delta^2\|\zeta_{\Omega}\|_{2}^2,
\end{equation}
where $m_d>0$ is a constant depending only on $d$. \smallskip

\noindent{(b)}\ $d=2$:  Let  $(\||V^{(2)}|^p\|_{\varphi^{\ast}})^{\frac{1}{p}}\leq K_2$ with  $p\geq2$.  Given  $K_2>0$ and $M>0$, we have
\begin{equation}\label{q2dp4}
\int_{\mathbb{R}^2}\varphi^{\ast}\left(\frac{|W_M^2|}{M^{-\frac{p-2}{2}}K_2^p}\right)dx\leq\int_{\mathbb{R}^2}\varphi^{\ast}\left(\frac{|V^{(2)}|^p}{K_2^p}\right)dx,
\end{equation}
and hence, using $\||V^{(2)}|^p\|_{\varphi^{\ast}}\leq K_2^p$, we get
\begin{equation}\label{q2dp5}
 \|W_M^2\|_{\varphi^{\ast}}\leq M^{-\frac{p-2}{2}}K_2^p.
 \end{equation}
 Using H\"{o}lder's inequality for Orlicz spaces  \eqref{osholder},  and \eqref{q2dp5}, we get
\begin{align}\label{q2dp6}
\int_{\mathbb{R}^2}W_M^2\omega_{\varrho}^{2-2\alpha}\eta^2\psi^2dx&\leq2\|W_M^2\|_{\varphi^{\ast}}\|\omega_{\varrho}^{2-2\alpha}\eta^2\psi^2\|_{\varphi}
\nonumber\\&\leq 2M^{-\frac{p-2}{2}}K_2^p\|\omega_{\varrho}^{2-2\alpha}\eta^2\psi^2\|_{\varphi}.
\end{align}
Using the Sobolev inequality given in  \cite[Theorem~0.1]{AT}, we obtain
\begin{align}\label{q2dp7}
\|\omega_{\varrho}^{2-2\alpha}\eta^2\psi^2\|_{\varphi}& \leq C_4\left(\int_{\mathbb{R}^2}|\omega_{\varrho}^{1-\alpha}\eta\psi|^2dx+\int_{\mathbb{R}^2}|\nabla(\omega_{\varrho}^{1-\alpha}\eta\psi)|^2dx\right)
\\ \notag  &  \le  C_4\int_{\mathbb{R}^2}|\omega_{\varrho}^{1-\alpha}\eta\psi|^2dx+2C_4\int_{\mathbb{R}^2}|\nabla\omega_{\varrho}^{1-\alpha}|^2\eta^2\psi^2dx\\ \notag &  \qquad +2C_4\int_{\mathbb{R}^2}\omega_{\varrho}^{1-2\alpha}|\nabla(\eta\psi)|^2dx.
\end{align}

Combining \eqref{q3dp4a}, \eqref{q2dp6}, \eqref{q2dp7}, and \eqref{q3dp9} with $d=2$, we conclude that
\begin{align}\label{q2dp10}
&\left(\frac{\alpha^3}{3C_3\varrho^4}-4K_1^2-4M-8C_4M^{-\frac{p-2}{2}}K_2^p-16C_4M^{-\frac{p-2}{2}}K_2^p\frac{\alpha^2}{\varrho^2}\right)\int_{\mathbb{R}^2}\omega_{\varrho}^{-1-2\alpha}\eta^2\psi^2dx
\nonumber\\& \hspace{60pt} +\left(\frac{\alpha}{3C_3\varrho^2}-16C_4M^{-\frac{p-2}{2}}K_2^p\right)\int_{\mathbb{R}^2}\omega_{\varrho}^{1-2\alpha}|\nabla(\eta\psi)|^2dx
\nonumber\\ &  \quad \le 4\int_{\supp\nabla\eta}\omega_{\varrho}^{2-2\alpha}|\nabla\eta|^2|\nabla\psi|^2dx
+\int_{\supp\nabla\eta}\omega_{\varrho}^{2-2\alpha}(\Delta\eta)^2\psi^2dx\\  \notag & \hspace{60pt}
+2\int_{\supp\eta}\omega_{\varrho}^{2-2\alpha}\eta^2\zeta^2dx .
\end{align}
Assuming $\alpha\geq\varrho$  and setting $M=K_2^2\alpha^{\frac{4}{p}}\varrho^{-\frac{4}{p}}$, we have
\begin{align}\label{q2dp11}
&4K_1^2+4M+8C_4M^{-\frac{p-2}{2}}K_2^p+16C_4M^{-\frac{p-2}{2}}K_2^p\frac{\alpha^2}{\varrho^2}
\\&\quad\leq4K_1^2+4M+24C_4M^{-\frac{p-2}{2}}K_2^p\frac{\alpha^2}{\varrho^2}
\nonumber\\&\quad=4K_1^2+4K_2^2(1+6C_4)\alpha^{\frac{4}{p}}\varrho^{-\frac{4}{p}}\leq4K^2(1+6C_4)\alpha^{\frac{4}{p}}\varrho^{-\frac{4}{p}}.\nonumber
\end{align}
Taking
\begin{equation}\label{q2dp13}
\alpha\geq C_5(1+K^{\frac{2p}{3p-4}})\varrho^{\frac{4p-4}{3p-4}}\geq C_5(1+K^{\frac{2p}{3p-4}})\varrho^{\frac{4}{3}},
\end{equation}
we can guarantee that  $\alpha>C_2$,
\begin{equation}\label{q2dp12}
\frac{\alpha^3}{3C_3\varrho^4}\geq3(4K^2(1+6C_4)\alpha^{\frac{4}{p}}\varrho^{-\frac{4}{p}}),
\end{equation}
and
\begin{equation}\label{q2dp14}
\frac{\alpha}{3C_3\varrho^2}-16C_4M^{-\frac{p-2}{2}}K_2^p\geq0.
\end{equation}

Using \eqref{ei1} and recalling \eqref{q3d2}, we obtain
\begin{equation}\label{q2dp15}
\int_{\mathbb{R}^2}\omega_{\varrho}^{-1-2\alpha}\eta^2\psi^2dx\geq\left(\frac{\varrho}{Q}\right)^{1+2\alpha}\|\psi_{\Theta}\|_2^2\geq(2C_1)^{1+2\alpha}\|\psi_{\Theta}\|_2^2.
\end{equation}
Combining \eqref{q2dp10}, \eqref{q2dp12}, \eqref{q2dp14} and \eqref{q2dp15}, we conclude that
\begin{align}\label{q2dp16}
&\frac{2\alpha^3}{9C_3\varrho^4}(2C_1)^{1+2\alpha}\|\psi_{\Theta}\|_2^2\leq 4\int_{\supp\nabla\eta}\omega_{\varrho}^{2-2\alpha}|\nabla\eta|^2|\nabla\psi|^2dx
\nonumber\\&  \hspace{40pt}+\int_{\supp\nabla\eta}\omega_{\varrho}^{2-2\alpha}(\Delta\eta)^2\psi^2dx+2\int_{\supp\eta}\omega_{\varrho}^{2-2\alpha}\eta^2\zeta^2dx.
\end{align}

Given $M>0$, we have
\begin{equation}\label{q2dp18}
\int_{\mathbb{R}^2}\varphi^{\ast}\left(\frac{|W_M|}{M^{-\frac{p-1}{2}}K_2^p}\right)dx\leq\int_{\mathbb{R}^2}\varphi^{\ast}\left(\frac{|V^{(2)}|^p}{K_2^p}\right)dx,
\end{equation}
and hence, using $\||V^{(2)}|^p\|_{\varphi^{\ast}}\leq K_2^p$, we get
$
 \|W_M\|_{\varphi^{\ast}}\leq M^{-\frac{p-1}{2}}K_2^p$. Let  $f\in\mathcal{D}(\nabla)$.  Then,
 using \eqref{q3dp17},  H\"{o}lder's inequality for Orlicz spaces  \eqref{osholder},       and the Sobolev inequality in  \cite[Theorem~0.1]{AT}, we get
\begin{equation}\label{q2dp20}
\left|\int_{\mathbb{R}^2}Vf^2dx\right|\leq (K_1+M^{\frac{1}{2}}+2C_4 M^{-\frac{p-1}{2}}K_2^p)\|f\|_2^2+2C_4 M^{-\frac{p-1}{2}}K_2^p\|\nabla f\|_2^2.
\end{equation}
Taking $M=(4C_4K_2^p)^{\frac{2}{p-1}}$ (we can require $C_4\geq1$), we get
\begin{equation}\label{q2dp21}
\left|\int_{\mathbb{R}^2}Vf^2dx\right|\leq4C_4(1+K^{\frac{p}{p-1}})\|f\|_2^2+\tfrac{1}{2}\|\nabla f\|_2^2.
\end{equation}

We have
\begin{align}\label{q2dp22}
&\int_{\{2C_1Q\leq|x|\leq2C_1Q+1\}}\omega_{\varrho}^{2-2\alpha}(4|\nabla\eta|^2|\nabla\psi|^2+(\Delta\eta)^2\psi^2)dx
\\ & \quad\le 64\left(\frac{C_1\varrho}{2C_1Q}\right)^{2\alpha-2}\int_{\{2C_1Q\leq|x|\leq2C_1Q+1\}}(4|\nabla\psi|^2+\psi^2)dx
\nonumber\\ & \quad\le C_6\left(\tfrac{5}{4}C_1\right)^{2\alpha-2}\int_{\{2C_1Q-1\leq|x|\leq2C_1Q+2\}}(\zeta^2+(1+K^{\frac{p}{p-1}})\psi^2)dx
\nonumber\\ & \quad\le C_6\left(\tfrac{5}{4}C_1\right)^{2\alpha-2}(\|\zeta_{\Omega}\|_2^2+(1+K^{\frac{p}{p-1}})\|\psi_{\Omega}\|_2^2),\nonumber
\end{align}
where we used \eqref{q2dp21} and an interior estimate. Similarly,
\begin{align}\label{q2dp23}
&\int_{\{\frac{\delta}{4}\leq|x|\leq\frac{3\delta}{4}\}}\omega_{\varrho}^{2-2\alpha}(4|\nabla\eta|^2|\nabla\psi|^2+(\Delta\eta)^2\psi^2)dx
\\ & \quad\le 1024\delta^{-4}(4\delta^{-1}C_1\varrho)^{2\alpha-2}\int_{\{\frac{\delta}{4}\leq|x|\leq\frac{3\delta}{4}\}}(4|\nabla\psi|^2+\psi^2)dx
\nonumber\\ & \quad\le C_7\delta^{-4}(4\delta^{-1}C_1\varrho)^{2\alpha-2}\int_{\{|x|\leq\delta\}}(\zeta^2+(K^{\frac{p}{p-1}}+\delta^{-2})\psi^2)dx
\nonumber\\ & \quad\le C_7\delta^{-4}(16\delta^{-1}C_1^2Q)^{2\alpha-2}(\|\zeta_{\Omega}\|_2^2+(K^{\frac{p}{p-1}}+\delta^{-2})\|\psi_{0,\delta}\|_2^2).\nonumber
\end{align}
In addition,
\begin{equation}\label{q2dp24}
\int_{\supp\eta}\omega_{\varrho}^{2-2\alpha}\eta^2\zeta^2dx\leq(4\delta^{-1}C_1\varrho)^{2\alpha-2}\|\zeta_{\Omega}\|_2^2\leq(16\delta^{-1}C_1^2Q)^{2\alpha-2}\|\zeta_{\Omega}\|_2^2.
\end{equation}

If we have
\begin{equation}\label{q2dp25}
\frac{\alpha^3}{\varrho^4}\left(\frac{8}{5}\right)^{2\alpha}\|\psi_{\Theta}\|_2^2\geq C_8(1+K^{\frac{p}{p-1}})\|\psi_{\Omega}\|_2^2,
\end{equation}
we obtain
\begin{equation}\label{q2dp26}
C_6\left(\tfrac{5}{4}C_1\right)^{2\alpha-2}(1+K^{\frac{p}{p-1}})\|\psi_{\Omega}\|_2^2\leq\frac{1}{2}\frac{2\alpha^3}{9C_3\varrho^4}(2C_1)^{1+2\alpha}\|\psi_{\Theta}\|_2^2,
\end{equation}
so we conclude that
\begin{align}\label{q2dp27}
&\frac{\alpha^3}{9C_3\varrho^4}(2C_1)^{1+2\alpha}\|\psi_{\Theta}\|_2^2
\\&  \hspace{40pt} \notag \leq C_9\delta^{-4}(16\delta^{-1}C_1^2Q)^{2\alpha-2}((K^{\frac{p}{p-1}}+\delta^{-2})\|\psi_{0,\delta}\|_2^2+\|\zeta_{\Omega}\|_2^2).
\end{align}
Thus,
\begin{equation}\label{q2dp28}
\frac{\alpha^3}{\varrho^4}Q^4((8C_1Q)^{-1}\delta)^{2\alpha+2}\|\psi_{\Theta}\|_2^2\leq C_{10}((K^{\frac{p}{p-1}}+\delta^{-2})\|\psi_{0,\delta}\|_2^2+\|\zeta_{\Omega}\|_2^2).
\end{equation}
Since $(\tfrac{\delta}{Q})^5\leq(\tfrac{1}{2})^5\leq\tfrac{1}{8C_1}$ by \eqref{q3d4}, we have
\begin{equation}\label{q2dp29}
\frac{\alpha^3}{\varrho^4}Q^6\left(\frac{\delta}{Q}\right)^{12\alpha+14}\|\psi_{\Theta}\|_2^2\leq C_{11}((1+K^{\frac{p}{p-1}})\|\psi_{0,\delta}\|_2^2+\delta^2\|\zeta_{\Omega}\|_2^2).
\end{equation}

To satisfy \eqref{q2dp13} and \eqref{q2dp25}, we choose
\begin{equation}\label{q2dp30}
\alpha=C_{12}(1+K^{\frac{2p}{3p-4}})\left(Q^{\frac{4p-4}{3p-4}}+\log\frac{\|\psi_{\Omega}\|_{2}}{\|\psi_{\Theta}\|_{2}}\right),
\end{equation}
Combining with \eqref{q2dp29}, and recalling $Q\geq1$, we get
\begin{align}\label{q2dp31}
&(1+K^{\frac{2p}{3p-4}})^3\left(\frac{\delta}{Q}\right)^{C_{13}(1+K^{\frac{2p}{3p-4}})\left(Q^{\frac{4p-4}{3p-4}}+\log\frac{\|\psi_{\Omega}\|_{2}}{\|\psi_{\Theta}\|_{2}}\right)}\|\psi_{\Theta}\|_2^2
\nonumber\\& \qquad  \qquad \le  C_{14}((1+K^{\frac{p}{p-1}})\|\psi_{0,\delta}\|_2^2+\delta^2\|\zeta_{\Omega}\|_2^2),
\end{align}
and hence there exists $m>0$ such that
\begin{equation}\label{q2dp32}
\left(\frac{\delta}{Q}\right)^{m(1+K^{\frac{2p}{3p-4}})\left(Q^{\frac{4p-4}{3p-4}}+\log\frac{\|\psi_{\Omega}\|_{2}}{\|\psi_{\Theta}\|_{2}}\right)}\|\psi_{\Theta}\|_{2}^2\leq\|\psi_{x_0,\delta}\|_{2}^2+\delta^2\|\zeta_{\Omega}\|_{2}^2.
\end{equation}

If\ $\|V^{(2)}\|_p\leq K_2<\infty$ for some\ $p>2$, we have $(\||V^{(2)}|^{p'}\|_{\varphi^{\ast}})^{\frac{1}{p'}}\leq K_2$ for any $p'\in[2,p)$ since
\begin{equation}
\int_{\mathbb{R}^2}\varphi^{\ast}\left(\frac{|V^{(2)}|^{p'}}{K_2^{p'}}\right)dx\leq\int_{\mathbb{R}^2}\left(\frac{|V^{(2)}|^{p'}}{K_2^{p'}}\right)^{\frac{p}{p'}}dx
\leq\int_{\mathbb{R}^2}\frac{|V^{(2)}|^p}{K_2^p}dx\leq1.
\end{equation}
We conclude that    \eqref{q2dp32}  holds with  $p'$ substituted for $p$. Letting $p' \uparrow p$ we obtain  \eqref{q2dp32}  since $K_2$  is independent of  $p'$. \smallskip

\noindent{(c)}\ $d=1$:   Let $\|V^{(2)}\|_p\leq K_2$ with  $p\geq2$.  Using H\"{o}lder's inequality and \eqref{q3dp5} with $q=2$, we get
\begin{align}\label{q1dp6}
\int_{\mathbb{R}}W_M^2\omega_{\varrho}^{2-2\alpha}\eta^2\psi^2dx&\leq\|W_M\|_2^2\|\omega_{\varrho}^{2-2\alpha}\eta^2\psi^2\|_{\infty}
\leq M^{-\frac{p-2}{2}}K_2^p\|\omega_{\varrho}^{2-2\alpha}\eta^2\psi^2\|_{\infty}.
\end{align}
Applying  Sobolev's inequality, we obtain
\begin{align}\label{q1dp7}
&\|\omega_{\varrho}^{2-2\alpha}\eta^2\psi^2\|_{\infty}\leq \int_{\mathbb{R}}|\omega_{\varrho}^{1-\alpha}\eta\psi|^2dx+\int_{\mathbb{R}}|(\omega_{\varrho}^{1-\alpha}\eta\psi)'|^2dx
\\&\leq \int_{\mathbb{R}}|\omega_{\varrho}^{1-\alpha}\eta\psi|^2dx+2\int_{\mathbb{R}}|(\omega_{\varrho}^{1-\alpha})'|^2\eta^2\psi^2dx
+2\int_{\mathbb{R}}\omega_{\varrho}^{1-2\alpha}|(\eta\psi)'|^2dx.\notag
\end{align}

Combining \eqref{q3dp4a}, \eqref{q1dp6}, \eqref{q1dp7}, and \eqref{q3dp9} with $d=1$, we conclude that
\begin{align}
&\left(\frac{\alpha^3}{3C_3\varrho^4}-4K_1^2-4M-4M^{-\frac{p-2}{2}}K_2^p-8C_4M^{-\frac{p-2}{2}}K_2^p\frac{\alpha^2}{\varrho^2}\right)\int_{\mathbb{R}}\omega_{\varrho}^{-1-2\alpha}\eta^2\psi^2dx
\nonumber\\& \hspace{60pt} +\left(\frac{\alpha}{3C_3\varrho^2}-8M^{-\frac{p-2}{2}}K_2^p\right)\int_{\mathbb{R}}\omega_{\varrho}^{1-2\alpha}|(\eta\psi)'|^2dx \nonumber
 \label{q1dp10}\\ &  \quad \le 4\int_{\supp\eta'}\omega_{\varrho}^{2-2\alpha}|\eta'|^2|\psi'|^2dx
+\int_{\supp\eta'}\omega_{\varrho}^{2-2\alpha}(\eta'')^2\psi^2dx
\\& \hspace{60pt}
+2\int_{\supp\eta}\omega_{\varrho}^{2-2\alpha}\eta^2\zeta^2dx . \nonumber
\end{align}
Assuming $\alpha\geq\varrho$, and setting $M=K_2^2\alpha^{\frac{4}{p}}\varrho^{-\frac{4}{p}}$, we have
\begin{align}\label{q1dp11}
&4K_1^2+4M+4M^{-\frac{p-2}{2}}K_2^p+8M^{-\frac{p-2}{2}}K_2^p\frac{\alpha^2}{\varrho^2}
\\&\quad\leq4K_1^2+4M+12M^{-\frac{p-2}{2}}K_2^p\frac{\alpha^2}{\varrho^2}
=4K_1^2+16K_2^2\alpha^{\frac{4}{p}}\varrho^{-\frac{4}{p}}\leq16K^2\alpha^{\frac{4}{p}}\varrho^{-\frac{4}{p}}. \nonumber
\end{align}
Taking
\begin{equation}\label{q1dp13}
\alpha\geq C_5(1+K^{\frac{2p}{3p-4}})\varrho^{\frac{4p-4}{3p-4}}\geq C_5(1+K^{\frac{2p}{3p-4}})\varrho^{\frac{4}{3}},
\end{equation}
we can guarantee that  $\alpha>C_2$,
\begin{equation}\label{q1dp12}
\frac{\alpha^3}{3C_3\varrho^4}\geq3(16K^2\alpha^{\frac{4}{p}}\varrho^{-\frac{4}{p}}),
\end{equation}
and
\begin{equation}\label{q1dp14}
\frac{\alpha}{3C_3\varrho^2}-8M^{-\frac{p-2}{2}}K_2^p\geq0.
\end{equation}

Using \eqref{ei1} and recalling \eqref{q3d2}, we obtain
\begin{equation}\label{q1dp15}
\int_{\mathbb{R}}\omega_{\varrho}^{-1-2\alpha}\eta^2\psi^2dx\geq\left(\frac{\varrho}{Q}\right)^{1+2\alpha}\|\psi_{\Theta}\|_2^2\geq(2C_1)^{1+2\alpha}\|\psi_{\Theta}\|_2^2.
\end{equation}
Combining \eqref{q1dp10}, \eqref{q1dp12}, \eqref{q1dp14} and \eqref{q1dp15}, we conclude that
\begin{align}\label{q1dp16}
&\frac{2\alpha^3}{9C_3\varrho^4}(2C_1)^{1+2\alpha}\|\psi_{\Theta}\|_2^2\leq4\int_{\supp\eta'}\omega_{\varrho}^{2-2\alpha}|\eta'|^2|\psi'|^2dx
\nonumber\\&  \hspace{40pt}+\int_{\supp\eta'}\omega_{\varrho}^{2-2\alpha}(\eta'')^2\psi^2dx+2\int_{\supp\eta}\omega_{\varrho}^{2-2\alpha}\eta^2\zeta^2dx
\end{align}

Let $f\in\mathcal{D}(\nabla)$ and $M>0$.
Using \eqref{q3dp17}, H\"{o}lder's inequality, \eqref{q3dp5} with $d=1$,
and Sobolev's inequality, we get
\begin{equation}\label{q1dp20}
\left|\int_{\mathbb{R}}Vf^2dx\right|\leq (K_1+M^{\frac{1}{2}}+M^{-\frac{p-1}{2}}K_2^p)\|f\|_2^2+ M^{-\frac{p-1}{2}}K_2^p\|f'\|_2^2.
\end{equation}
Taking\ $M=(2K_2^p)^{\frac{2}{p-1}}$, we get
\begin{equation}\label{q1dp21}
\left|\int_{\mathbb{R}}Vf^2dx\right|\leq2(1+K^{\frac{p}{p-1}})\|f\|_2^2+\frac{1}{2}\|f'\|_2^2.
\end{equation}

We have
\begin{align}\label{q1dp22}
&\int_{\{2C_1Q\leq|x|\leq2C_1Q+1\}}\omega_{\varrho}^{2-2\alpha}(4|\eta'|^2|\psi'|^2+(\eta'')^2\psi^2)dx
\\ & \quad\le 64\left(\frac{C_1\varrho}{2C_1Q}\right)^{2\alpha-2}\int_{\{2C_1Q\leq|x|\leq2C_1Q+1\}}(4|\psi'|^2+\psi^2)dx
\nonumber\\ & \quad\le C_6\left(\frac{5}{4}C_1\right)^{2\alpha-2}\int_{\{2C_1Q-1\leq|x|\leq2C_1Q+2\}}(\zeta^2+(1+K^{\frac{p}{p-1}})\psi^2)dx
\nonumber\\ & \quad\le C_6\left(\frac{5}{4}C_1\right)^{2\alpha-2}(\|\zeta_{\Omega}\|_2^2+(1+K^{\frac{p}{p-1}})\|\psi_{\Omega}\|_2^2), \notag
\end{align}
where we used \eqref{q2dp21} and an interior estimate. Similarly,
\begin{align}\label{q1dp23}
&\int_{\{\frac{\delta}{4}\leq|x|\leq\frac{3\delta}{4}\}}\omega_{\varrho}^{2-2\alpha}(4|\eta'|^2|\psi'|^2+(\eta'')^2\psi^2)dx
\\ & \quad\le 1024\delta^{-4}(4\delta^{-1}C_1\varrho)^{2\alpha-2}\int_{\{\frac{\delta}{4}\leq|x|\leq\frac{3\delta}{4}\}}(4|\psi'|^2+\psi^2)dx
\nonumber\\ & \quad\le C_7\delta^{-4}(4\delta^{-1}C_1\varrho)^{2\alpha-2}\int_{\{|x|\leq\delta\}}(\zeta^2+(K^{\frac{p}{p-1}}+\delta^{-2})\psi^2)dx
\nonumber\\ & \quad\le C_7\delta^{-4}(16\delta^{-1}C_1^2Q)^{2\alpha-2}(\|\zeta_{\Omega}\|_2^2+(K^{\frac{p}{p-1}}+\delta^{-2})\|\psi_{0,\delta}\|_2^2). \notag
\end{align}
In addition,
\begin{equation}\label{q1dp24}
\int_{\supp\eta}\omega_{\varrho}^{2-2\alpha}\eta^2\zeta^2dx\leq(4\delta^{-1}C_1\varrho)^{2\alpha-2}\|\zeta_{\Omega}\|_2^2\leq(16\delta^{-1}C_1^2Q)^{2\alpha-2}\|\zeta_{\Omega}\|_2^2.
\end{equation}

If we have
\begin{equation}\label{q1dp25}
\frac{\alpha^3}{\varrho^4}\left(\frac{8}{5}\right)^{2\alpha}\|\psi_{\Theta}\|_2^2\geq C_8(1+K^{\frac{p}{p-1}})\|\psi_{\Omega}\|_2^2,
\end{equation}
we obtain
\begin{equation}\label{q1dp26}
C_6\left(\tfrac{5}{4}C_1\right)^{2\alpha-2}(1+K^{\frac{p}{p-1}})\|\psi_{\Omega}\|_2^2\leq\frac{1}{2}\frac{2\alpha^3}{9C_3\varrho^4}(2C_1)^{1+2\alpha}\|\psi_{\Theta}\|_2^2,
\end{equation}
so we conclude that
\begin{align}\label{q1dp27}
&\frac{\alpha^3}{9C_3\varrho^4}(2C_1)^{1+2\alpha}\|\psi_{\Theta}\|_2^2
\\&  \hspace{40pt} \notag\leq C_9\delta^{-4}(16\delta^{-1}C_1^2Q)^{2\alpha-2}((K^{\frac{p}{p-1}}+\delta^{-2})\|\psi_{0,\delta}\|_2^2+\|\zeta_{\Omega}\|_2^2).
\end{align}
Thus,
\begin{equation}\label{q1dp28}
\frac{\alpha^3}{\varrho^4}Q^4((8C_1Q)^{-1}\delta)^{2\alpha+2}\|\psi_{\Theta}\|_2^2\leq C_{10}((K^{\frac{p}{p-1}}+\delta^{-2})\|\psi_{0,\delta}\|_2^2+\|\zeta_{\Omega}\|_2^2).
\end{equation}
Since $(\tfrac{\delta}{Q})^5\leq(\tfrac{1}{2})^5\leq\tfrac{1}{8C_1}$ by \eqref{q3d4}, we have
\begin{equation}\label{q1dp29}
\frac{\alpha^3}{\varrho^4}Q^6\left(\frac{\delta}{Q}\right)^{12\alpha+14}\|\psi_{\Theta}\|_2^2\leq C_{11}((1+K^{\frac{p}{p-1}})\|\psi_{0,\delta}\|_2^2+\delta^2\|\zeta_{\Omega}\|_2^2).
\end{equation}

To satisfy \eqref{q1dp13} and \eqref{q1dp25}, we choose
\begin{equation}\label{q1dp30}
\alpha=C_{12}(1+K^{\frac{2p}{3p-4}})\left(Q^{\frac{4p-4}{3p-4}}+\log\frac{\|\psi_{\Omega}\|_{2}}{\|\psi_{\Theta}\|_{2}}\right),
\end{equation}
Combining with\ \eqref{q1dp29}, and recalling\ $Q\geq1$, we get
\begin{align}\label{q1dp31}
&(1+K^{\frac{2p}{3p-4}})^3\left(\frac{\delta}{Q}\right)^{C_{13}(1+K^{\frac{2p}{3p-4}})\left(Q^{\frac{4p-4}{3p-4}}+\log\frac{\|\psi_{\Omega}\|_{2}}{\|\psi_{\Theta}\|_{2}}\right)}\|\psi_{\Theta}\|_2^2
\\& \qquad  \qquad \le C_{14}((1+K^{\frac{p}{p-1}})\|\psi_{0,\delta}\|_2^2+\delta^2\|\zeta_{\Omega}\|_2^2), \nonumber
\end{align}
and hence there exists\ $m>0$\ such that
\begin{equation}\label{q1dp32}
\left(\frac{\delta}{Q}\right)^{m(1+K^{\frac{2p}{3p-4}})\left(Q^{\frac{4p-4}{3p-4}}+\log\frac{\|\psi_{\Omega}\|_{2}}{\|\psi_{\Theta}\|_{2}}\right)}\|\psi_{\Theta}\|_{2}^2\leq\|\psi_{x_0,\delta}\|_{2}^2+\delta^2\|\zeta_{\Omega}\|_{2}^2.
\end{equation}
\end{proof}

\section{Unique continuation principle for spectral projections of Schr\"{o}dinger operators with singular potentials}\label{secUCPSP}

The following theorem, a consequence of  Theorem~\ref{qucp3d},  is an extension of \cite[Theorem~B.4]{KN} to Schr\"{o}dinger operators with singular potentials. Theorem~\ref{ucpsp} follows   from  Theorem~\ref{ucpt}.

\begin{theorem}\label{ucpt}
Let $H=-\Delta+V$ be a Schr\"{o}dinger operator on $\mathrm{L}^2(\mathbb{R}^d)$, where $V=V^{(1)}+V^{(2)}$ with $\|V^{(1)}\|_{\infty}\leq K_1<\infty$ and $\|V^{(2)}\|_p\leq K_2<\infty$ with $p\geq d$ for $d\geq3$, $p>2$ for $d=2$, and $p\geq2$ for $d=1$.   Set $K=K_1+K_2$. Fix $\delta\in(0,\frac{1}{2}]$, let $\{y_k\}_{k\in\mathbb{Z}^d}$ be sites in $\mathbb{R}^d$ with $B(y_k,\delta)\subset\Lambda_1(k)$ for all $k\in\mathbb{Z}^d$.
There exists a constant $M_d>0$, such that given a rectangle $\Lambda$ as in \eqref{ucp2}, where $a\in\mathbb{R}^d$ and $L_j\geq114\sqrt{d}$ for $j=1,\ldots,d$, and a real-valued $\psi\in\mathcal{D}(H_{\Lambda})$, we have
\begin{equation}\label{ucpt1}
\delta^{M_d\left(1+K^{\beta_{d,p}}\right)}\|\psi_{\Lambda}\|_2^2\leq\sum_{k\in \mathbb{Z}^d, \, \Lambda_1(k)\subset\Lambda}\|\psi_{y_k,\delta}\|_2^2+\delta^2\|((-\Delta+V)\psi)_{\Lambda}\|_2^2,
\end{equation}
where
\begin{equation}\label{ucpt2}
\beta_{d,p}=\left\{\begin{array}{ll}
\frac{2p}{3p-2d}&\mbox{for}\quad d\geq2\\\frac{2p}{3p-4}&\mbox{for}\quad d=1
\end{array}
\right..
\end{equation}
\end{theorem}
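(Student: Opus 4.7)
The plan is to apply Theorem~\ref{qucp3d} on a collection of dimensionally bounded regions covering $\Lambda$, following the scheme of \cite{Kl,KN} for the bounded-potential case but with Theorem~\ref{qucp3d} substituted for its bounded-potential predecessor. First, I would handle the boundary condition of $H_\Lambda$ by extending $\psi$ to a slightly larger domain $\widetilde\Omega\supset\Lambda$: periodic boundary conditions allow the periodic extension of $\psi$ and $V|_\Lambda$, while Dirichlet boundary conditions allow odd reflection of $\psi$ paired with even reflection of $V|_\Lambda$ across each face of $\Lambda$. Either extension preserves the decomposition $V=V^{(1)}+V^{(2)}$ with the same $L^\infty$ and $L^p$ bounds, and produces $\tilde\psi$ satisfying $-\Delta\tilde\psi+\tilde V\tilde\psi=\tilde\zeta$ on $\widetilde\Omega$. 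The hypothesis $L_j\geq 114\sqrt d$ supplies the room needed for the QUCP geometry below.

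For each $k\in\mathbb{Z}^d$ with $\Lambda_1(k)\subset\Lambda$ I would pick a designated neighbor $k_*=k_*(k)$ with $\Lambda_1(k_*)\subset\Lambda$ (say $k_*=k+e_1$, modified near $\partial\Lambda$) and a ball $\Omega_k\subset\widetilde\Omega$ of dimensional radius $R_d=O(\sqrt d)$, chosen so that the hypotheses of Theorem~\ref{qucp3d} are verified with $\Theta=\Lambda_1(k)$, $x_0=y_{k_*}$, and $Q=Q(y_{k_*},\Lambda_1(k))\in[1,O(\sqrt d)]$. Since $Q$ is dimensional, the non-logarithmic part of the QUCP exponent is dimensional, and Theorem~\ref{qucp3d} yields
\begin{equation*}
\left(\tfrac{\delta}{Q}\right)^{m_d(1+K^{\beta_{d,p}})\left(C_d+\log(\|\tilde\psi_{\Omega_k}\|_2/\|\psi_{\Lambda_1(k)}\|_2)\right)}\|\psi_{\Lambda_1(k)}\|_2^2\leq\|\psi_{y_{k_*},\delta}\|_2^2+\delta^2\|\tilde\zeta_{\Omega_k}\|_2^2.
\end{equation*}
To eliminate the remaining logarithm I would use a good/bad dichotomy: letting $D_d$ be a dimensional constant such that $\sum_{k}\|\tilde\psi_{\Omega_k}\|_2^2\leq D_d\|\psi_\Lambda\|_2^2$ and $\sum_{k}\|\tilde\zeta_{\Omega_k}\|_2^2\leq D_d\|\zeta_\Lambda\|_2^2$ (which follows from the bounded overlap of the family $\{\Omega_k\}$ and the controlled extension), I declare $k$ \emph{good} if $\|\psi_{\Lambda_1(k)}\|_2^2\geq(2D_d)^{-1}\|\tilde\psi_{\Omega_k}\|_2^2$ and \emph{bad} otherwise. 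A direct summation then gives $\sum_{k\text{ bad}}\|\psi_{\Lambda_1(k)}\|_2^2\leq\tfrac{1}{2}\|\psi_\Lambda\|_2^2$, so good sites alone account for at least half of $\|\psi_\Lambda\|_2^2$, and for good $k$ the logarithm in the display above is dominated by $\tfrac{1}{2}\log(2D_d)$.

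Summing the good-site bounds, the bounded overlap gives $\sum_{k\text{ good}}\|\tilde\zeta_{\Omega_k}\|_2^2\leq D_d\|\zeta_\Lambda\|_2^2$, and the (essentially) bijective assignment $k\mapsto k_*$ gives $\sum_{k\text{ good}}\|\psi_{y_{k_*},\delta}\|_2^2\leq D_d\sum_k\|\psi_{y_k,\delta}\|_2^2$. Combining with the good/bad comparison and absorbing the remaining dimensional constants into the exponent of $\delta$ (possible because $\delta\leq\tfrac12$) yields \eqref{ucpt1}. The main obstacle I expect is the boundary bookkeeping — constructing $\tilde\psi$, choosing $k_*$ and $\Omega_k$ uniformly in $\Lambda$, and verifying that the overlap constant $D_d$ is genuinely dimensional — all of which lean on the hypothesis $L_j\geq 114\sqrt d$. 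Once that is in place the remaining estimates are routine.
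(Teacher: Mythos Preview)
Your proposal is correct and matches the paper's approach, which simply cites \cite[Theorem~B.4]{KN} with Theorem~\ref{qucp3d} in place of its bounded-potential predecessor. The only point the paper singles out---and the one place your sketch is slightly loose---is that the extension of $V^{(2)}$ does not preserve the $L^p$ norm exactly but only up to a dimensional factor (the paper uses $3^d K_2$ on the relevant boxes), which is then absorbed into $M_d$.
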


\begin{proof}[Proof of Theorem~\ref{ucpt}]  Under  the hypotheses of the theorem   $V\in \mathrm{L}_{loc}^2(\mathbb{R}^d)$, which implies that    $\mathcal{D}(\Delta_{\Lambda})\cap \{\phi \in \mathrm{L}^2(\Lambda): \, V\phi \in\mathrm{L}^2(\Lambda) \}$ is an operator core for $H_{\Lambda}$, so  it suffices to prove the theorem for  $\psi \in \mathcal{D}(\Delta_{\Lambda})$ with  $V\psi \in\mathrm{L}^2(\Lambda)$.

Using the notation in the proof of \cite[Theorem B.4]{KN}, we have $\|\widehat{V^{(1)}}\|_{\infty}=\|V^{(1)}\|_{\infty}\leq K_1$ and $\|\widehat{V^{(2)}}_{\Lambda_{Y\mathbf{\tau}}(\kappa)}\|_p\leq 3^d\|V^{(2)}_{\Lambda}\|_p\leq 3^dK_2$ for any $\kappa\in\Lambda$, since $\Lambda_{Y\mathbf{\tau}}(\kappa)\subset\Lambda_{3\mathbf{L}}$ as $Y\tau_j<\frac{L_j}{2}, j=1,2,\ldots,d$. Using Theorem \ref{qucp3d} and following the proof of \cite[Theorem B.4]{KN}, we prove \eqref{ucpt1}.
\end{proof}

\begin{proof}[Proof of Theorem~\ref{ucpsp}]
From \eqref{q3dp20}, \eqref{q2dp21} and \eqref{q1dp21}, there exists a constant $C_d>0$
such that for all\ $f\in\mathcal{D}(\nabla)$
\begin{equation}\label{utp1}
\left|\int_{\mathbb{R}^d}Vf^2dx\right|\leq\theta\|f\|_2^2+\frac{1}{2}\|\nabla f\|_2^2
\end{equation}
where $\theta=C_d(1+K^{\frac{2p}{2p-d}})$ for $d\geq2$ and $\theta=C_1(1+K^{\frac{p}{p-1}})$ for $d=1$.
Therefore $\sigma(H_{\Lambda})\subset[-\theta,\infty)$, and hence it suffices to consider\ $E_0\geq-\theta$\ and\ $E\in[-\theta,E_0]$. We have $V-E=(V^{(1)}-E)+V^{(2)}$, where
\begin{equation}\label{utp2}
\|V^{(1)}-E\|_{\infty}\leq\|V^{(1)}\|_{\infty}+\max\{E_0,\theta\}\leq K_1+E_0+\theta
\end{equation}
and $\|V^{(2)}\|_{p}\leq K_2$. Applying Theorem \ref{ucpt} and following the proof of \cite[Theorem B.1]{KN}, we prove \eqref{ucpsp3}.
\end{proof}

\end{document}